\documentclass[11pt,letterpaper]{article}
\usepackage[margin=1in]{geometry}
\usepackage[T1]{fontenc}
\usepackage{lmodern}
\usepackage{amsmath,amssymb,amsthm,mathtools}
\usepackage{enumitem,microtype}
\usepackage[colorlinks=true,linkcolor=blue,citecolor=blue,urlcolor=blue]{hyperref}
\newcommand{\doi}[1]{\href{https://doi.org/#1}{doi:\,\nolinkurl{#1}}}
\newtheorem{theorem}{Theorem}[section]
\newtheorem{lemma}[theorem]{Lemma}
\newtheorem{proposition}[theorem]{Proposition}
\newtheorem{corollary}[theorem]{Corollary}
\newcommand{\R}{\mathbb R}\newcommand{\Q}{\mathbb Q}\newcommand{\Z}{\mathbb Z}
\newcommand{\norm}[1]{\lVert #1\rVert}
\newcommand{\supp}{\operatorname{supp}}
\newcommand{\im}{\operatorname{im}}
\newcommand{\poly}{\operatorname{poly}}
\newcommand{\enc}[1]{\langle #1\rangle}
\newcommand{\one}{\mathbf 1}
\title{Extending Tardos's Theorem to\\
Separable Convex Quadratic Programming}
\author{Bin Cui \qquad Hanqing Li \qquad Xupeng Miao\\Peking University}
\date{}
\hypersetup{pdftitle={Extending Tardos's Theorem to Separable Convex Quadratic Programming},
  pdfauthor={Bin Cui, Hanqing Li, Xupeng Miao},pdfsubject={Exact algorithms for continuous convex quadratic programming}}
\begin{document}
\maketitle

\begin{abstract}
\normalsize
For an integer equality matrix $A\in\Z^{r\times m}$ of binary encoding
length $\enc{A}$, Tardos's theorem solves linear programs using
$\poly(r,m,\enc{A})$ rational arithmetic operations and comparisons,
independently of the right-hand side and objective coefficients.
We extend this guarantee to continuous separable convex quadratic
programming over $Ax=b$, $x\ge0$: the operation count is independent
of the right-hand side, linear costs, and all nonnegative quadratic weights.
The algorithm returns an exact optimal primal--dual pair or a certificate
of infeasibility or unboundedness. All intermediate rational numbers have
encoding lengths polynomial in the total input length.

As an application, we obtain a strongly polynomial algorithm for
continuous multicommodity flow with shared capacities and quadratic
costs of aggregate arc flow, independently of demands, capacities, and
cost coefficients.

Objective compression and continuous--integer proximity yield a local
solver returning a point close to an optimum. We relax nonnegativity on
coordinates with zero optimal dual slack, minimize the largest remaining
coordinate, and optimize over the relaxed coordinates. A descent LP over
the recession cone provides a complementary bound. Circuit decomposition
combines these bounds into a scale ensuring that each local call reveals
a new coordinate. At most $m$ calls, followed by two LPs, recover an exact
optimum and its dual certificate.
\end{abstract}
\clearpage

\section{Introduction}
For an integer matrix $A\in\Z^{r\times m}$ and rational vectors $b,c$,
Tardos's theorem~\cite{Tardos1986} solves the linear program
$\min\{c^{\top}x:Ax=b,\ x\ge0\}$ with
$\poly(r,m,\enc{A})$ basic operations, independently of the encoding
lengths of $b,c$. Here $\enc{A}$ denotes the binary encoding length
of $A$, and basic operations are addition, subtraction, multiplication,
division, and comparison of rational numbers.
This yields strongly polynomial algorithms for linear programs with
small integer constraint coefficients. We extend this guarantee to
separable convex quadratic objectives.

We consider the continuous problem
\begin{equation}\label{eq:problem}
 \min\{\Phi(x):Ax=b,\ x\ge0\},\qquad
 \Phi(x)=\tfrac12\sum_{e=1}^m q_ex_e^2+\sum_{e=1}^m c_ex_e,
\end{equation}
where $A\in\Z^{r\times m}$, $b\in\Q^r$, $c\in\Q^m$, and
$q\in\Q^m_{\ge0}$. Write $Q=\operatorname{diag}(q)$, so that
$\Phi(x)=\tfrac12x^{\top}Qx+c^{\top}x$. The quadratic weights may be zero.

V\'egh~\cite{Vegh2016} obtained a strongly polynomial algorithm for separable
convex quadratic network flows. We give a Tardos-type extension
to arbitrary integer equality matrices.

\begin{theorem}\label{thm:main}
Problem~\eqref{eq:problem} can be solved with
$\poly(r,m,\enc{A})$ basic operations,
independently of the encoding lengths of $b,c,q$. The algorithm returns
an exact optimal primal--dual pair, a certificate of infeasibility, or a
feasible point and a direction certifying unboundedness. Every
intermediate rational number has encoding length polynomial in the
complete input.
\end{theorem}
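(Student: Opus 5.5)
The plan follows the strategy outlined in the abstract. The algorithm repeatedly identifies a coordinate $e$ whose optimal dual slack is provably zero, or whose optimal value is provably positive. It does so until the structure of an optimum is pinned down, and then reads off the exact solution from linear systems.

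\textbf{Step 1: reductions.} First we dispose of infeasibility and unboundedness. Feasibility of $\{Ax=b,\ x\ge0\}$ is an LP with objective independent of $b$, so Tardos's theorem~\cite{Tardos1986} applies. It returns a feasible point or a Farkas certificate. For unboundedness we solve the LP $\min\{c^\top d: Ad=0,\ d\ge0,\ d_e=0 \text{ for } q_e>0\}$ over the recession directions of $\Phi$. Because $q\ge0$, $\Phi$ is unbounded below on the feasible set exactly when this LP has a direction with $c^\top d<0$. Its matrix depends only on $A$ and on the zero pattern of $q$, so Tardos again applies. Otherwise an optimum exists. Since $\Phi$ is separable with $Q\succeq0$, the coordinates with $q_e>0$ are determined uniquely across optima. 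The KKT conditions read $Qx+c-A^\top y=s\ge0$ and $s^\top x=0$.

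\textbf{Step 2: local solver via compression and proximity.} Given the current state (a set $Z$ of coordinates known to have zero dual slack, where nonnegativity may be dropped), we do three things. We scale the data $(b,c,q)$ by a scale parameter $\Delta$. We round to integers of size $\poly(m,\enc{A})$ bits. We then solve the rounded problem with a polynomial-time convex QP method, for instance an ellipsoid or interior-point method with exact rounding. This yields a point $\hat x$. Continuous--integer proximity bounds for separable convex programs over $\ker A$ then give $\norm{\hat x-x^*}_\infty\le m\,\Delta(A)\cdot(\text{rounding error})$ for some optimum $x^*$. Here the circuit imbalance of $A$ is bounded by $2^{\poly(\enc{A})}$. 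Next we minimize the largest remaining coordinate among those outside $Z$ and optimize over the relaxed coordinates. Whenever this maximum exceeds the proximity radius, we certify that the corresponding $x^*_e>0$, hence $s_e=0$, and add $e$ to $Z$. A complementary descent LP over the recession cone provides the opposite bound for the case where every coordinate is small.

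\textbf{Step 3: choosing the scale.} This is the main obstacle. The scale must be chosen so that each call is guaranteed to reveal a new coordinate, rather than merely being consistent with the current information. My plan is to decompose $x^*$ (or the optimal change from the previous step) conformally into circuits of $A$, each with entries bounded via $\Delta(A)$. Combining the max-coordinate bound with the descent-LP bound then shows that at the chosen scale some circuit has a coordinate of magnitude beyond the proximity radius, times $\poly(m)\Delta(A)$. This forces a new element into $Z$. Since $|Z|$ grows each round, there are at most $m$ calls. Each call uses $\poly(r,m,\enc{A})$ operations, because the rounded instances have size bounded independently of $b,c,q$.

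\textbf{Step 4: exact recovery.} Once $Z$ is maximal, the remaining coordinates can be set to zero consistently. The optimum is then characterized by a linear KKT system: $x_e=0$ outside $Z$, and $Qx+c-A^\top y=0$ on $Z$. We find a primal optimum by solving the convex QP restricted to $Z$ without sign constraints. This amounts to a linear system, found by Gaussian elimination in strongly polynomial time with polynomial bit lengths. The dual is obtained from a final LP in $(y,s)$ whose constraint matrix is built from $A$ and the identity, solved by Tardos. The second of the two final LPs verifies sign feasibility. Gaussian elimination (Edmonds) and Tardos both keep the encoding lengths of intermediate numbers polynomial, which gives the final claim of the theorem.
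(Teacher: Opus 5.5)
Your Step 1 essentially matches the paper's preprocessing: your homogeneous LP is the Farkas alternative to \eqref{eq:lowerpotential}. The local solver in Step 2, however, does not work as described, and this is the central gap. You scale and round the data $(b,c,q)$ and then invoke continuous--integer proximity with a ``rounding error'' term. Proximity theorems of Hochbaum--Shanthikumar type compare continuous and integer optima of the \emph{same} separable objective on the \emph{same} polyhedron. They give no control over how the continuous optimum moves when $q$ and $c$ are perturbed, and no bound depending only on $A$ and the rounding error can hold. In $\min\{\tfrac12qx_1^2+c_1x_1+c_2x_2: x_1+x_2=b,\ x\ge0\}$ the interior optimum is $x_1=(c_2-c_1)/q$, so an error $\eta$ in $c$ moves it by up to $2\eta/q$. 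That is exactly the dependence on $q$ the theorem must avoid.

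The paper scales the \emph{variables} instead of the data. Given a center $a$ and a box of radius $B\varepsilon$ known to contain an optimum, it writes $x=a+\delta u$ with $\delta=\varepsilon/(2mG)$. It rounds the lower bounds of the resulting box of integer radius $R$ inward, which preserves all integer points. It then applies Frank--Tardos compression to the coefficients of the features $(u_e^2,u_e)$ with $T=1+mR^2+2mR$. This preserves the integer optimal set exactly, and also the signs of the $q_e$, hence convexity. Two-sided proximity is then chained through a shared integer optimum: compressed continuous optimum, common integer optimum, true relaxed optimum. Each step costs $mG$ in scaled units, so $\varepsilon$ in total. This solver requires an a priori box containing an optimum, which your outline never supplies.

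Supplying that box and guaranteeing progress is what your Step 3 leaves as a plan, and its ingredients are missing. The paper first builds a trial point $t$ from the min-max LP (value $\alpha_F$), followed by a linear system making $t$ stationary on the fiber over the revealed set $F$. The descent LP carries bounds $|d_e|\le1/q_e$ on $N$ (value $V_F$), and the scale is $E_F=\max(\alpha_F,V_F)$. Two separate lemmas are then needed.
\begin{enumerate}
\item The covering bound (Lemma~\ref{lem:cover}) says the $\ell_1$-nearest optimum lies within $mG\,E_F$ of $t$. Circuits decreasing an external coordinate have coefficient at most $\alpha_F$. Positive-curvature circuits have coefficient at most $V_F$, because $h/(h^{\top}Qh)$ is feasible in the descent LP. Zero-curvature circuits are ruled out by the nearest choice.
\item The progress bound (Lemma~\ref{lem:progress}) says $E_F\le m\kappa\rho_F(z)$ for every relaxed optimum. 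It uses stationarity of $t$ to force $Qh=0$ on circuits supported in $F$.
\end{enumerate}
Only together do they make $\varepsilon=E_F/(4m\kappa)$ reveal a new coordinate. The min-max LP itself certifies nothing; certification comes from $x_e>\varepsilon$ in the local solver's output.

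Finally, Step 4 is wrong as stated, for two reasons.
\begin{enumerate}
\item ``$Z$ maximal'' is neither checkable nor shown to allow zeroing the other coordinates. The paper stops at $E_F=0$, which gives $t_{\bar F}=0$ and relaxed optimality of $t$.
\item The QP on the revealed set without sign constraints can return negative zero-curvature coordinates. For $\min\tfrac12x_1^2$ subject to $x_1+x_2-x_3=1$, $x\ge0$, all optimal dual slacks vanish. Yet with coordinates $2,3$ revealed, your linear system is solved by $x=(0,0,-1)$, and a final LP that only checks signs cannot repair this.
\end{enumerate}
You must fix the uniquely determined $x_N=t_N$ and solve the LP $\min\{c_Z^{\top}x_Z:A_Zx_Z=b-A_Nt_N,\ x_Z\ge0\}$ over $Z=\{e:q_e=0\}$ (not your revealed set). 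Only then do you solve the certificate LP for $\pi$.
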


For matrix classes whose entry
encodings are bounded by a fixed polynomial in the dimensions, the
theorem gives a strongly polynomial algorithm; in particular, this
holds for $\{0,\pm1\}$ matrices.
We refer to $A$ and the dimensions as the structural input.

\paragraph{Multicommodity congestion.}
Continuous multicommodity flow with shared capacities and quadratic
costs of aggregate arc flow admits a simple reformulation: introducing
aggregate-flow variables and capacity slacks gives a separable convex
quadratic program with a $\{0,\pm1\}$ equality matrix.
Theorem~\ref{thm:main} therefore yields a strongly polynomial algorithm
(Corollary~\ref{cor:flow}).

\paragraph{Organization.}
Section~\ref{sec:structure} develops the structural tools and
preprocessing. Section~\ref{sec:local} constructs the local solver,
and Section~\ref{sec:algorithm} gives the revelation algorithm and
proves Theorem~\ref{thm:main}. Section~\ref{sec:applications} establishes
the multicommodity-flow consequence. Section~\ref{sec:discussion} discusses
directions for further work.

\subsection{Technical overview}
Our algorithm addresses two questions: how to compute a point
close to a continuous optimum independently of the objective
coefficients, and how to use such points to recover an exact optimum
in a number of calls bounded by the dimensions.

\paragraph{A compact local problem.}
Suppose a box of radius $B\varepsilon$ around a feasible point contains an
optimum. We construct a feasible point within $\ell_\infty$ distance
$\varepsilon$ of an optimum. Translation and scaling give a homogeneous equality system,
and inward rounding of the bounds preserves the integer feasible points.
Frank--Tardos compression~\cite{FrankTardos1987} applied to the quadratic
and linear coefficients preserves objective comparisons on these
integer points. We then solve the compressed continuous quadratic
program. Two-sided continuous--integer proximity connects its solution
to a continuous optimum for the uncompressed objective through a shared
integer optimum.
The compressed problem has encoding length polynomial in $\log B$
and the structural input, giving the required operation bound for
the local computation. The integer optima are used only in the proof.

\paragraph{From proximity to recovery.}
The algorithm maintains a set $F$ of coordinates with zero optimal
dual slack and relaxes their nonnegativity constraints. Its goal is
to reveal a new coordinate in $J=\bar F$. A local point $x$ within
distance $\varepsilon$ of an optimum certifies zero optimal dual slack
on any coordinate $j\in J$ with $x_j>\varepsilon$.
To obtain a suitable center, the algorithm first solves an LP that
minimizes the largest coordinate in $J$, then fixes these coordinates
and optimizes over $F$ to obtain a feasible trial point $t$.
The first step makes the external
coordinates small; the second gives stationarity along feasible
directions supported in $F$.
A descent LP over the recession cone measures first-order descent
along directions that are nonnegative on $J$, with bounds $1/q_e$ on
the positive-curvature coordinates of a direction.
The quadratic weights enter only the objective and the right-hand sides
of the bounds. The two LP values define a scale $E_F$.

Conformal circuit decomposition is the main tool connecting these
auxiliary problems to the local solver. It decomposes displacements
into integer circuits of structurally bounded size without coordinate
cancellation. For the covering bound, decompose the displacement from
$t$ to a nearest optimum. A circuit that decreases an external
coordinate has its coefficient bounded by that coordinate of $t$;
a circuit with nonnegative external coordinates and positive curvature
is controlled by the descent LP. Choosing a nearest optimum excludes
redundant displacements along flat directions. This places an optimum
within a structural multiple of $E_F$ of $t$.
For the progress bound, stationarity on $F$ ensures that circuits
supported there contribute only to zero-curvature coordinates.
The remaining circuits bound $E_F$ by a structural multiple of the
largest external coordinate of every optimum.
These bounds prescribe a local box and an accuracy that guarantee a new
tight coordinate on every call. At most $m$ local calls suffice.
At zero scale, the trial point is optimal for the relaxed problem,
so its positive-curvature coordinates equal those of an original optimum.
Fixing these coordinates leaves an LP in the zero-curvature variables;
a second LP computes a dual certificate. Each iteration reconstructs
its numerical data from the original input and $F$, giving polynomial
intermediate encodings.

\subsection{Related work}
Granot and Skorin-Kapov~\cite{GranotSkorin1990} extended Tardos's
framework to strictly convex quadratic programming. Their operation
bound is independent of $b,c$ but depends on the numerical entries of
both the constraint and quadratic matrices.
For network flow constraints, V\'egh~\cite{Vegh2016} gave a strongly
polynomial algorithm for separable convex quadratic costs,
independent of demands, capacities,
and all cost coefficients. He identified a Tardos-type extension
beyond network constraints as a direction for further work.

Compression of integer quadratic and polynomial objectives was studied
by Granot and Skorin-Kapov~\cite{GranotSkorin1989} and Etscheid,
Kratsch, Mnich, and R\"oglin~\cite{EtscheidEtAl2017}. Hochbaum and
Shanthikumar~\cite[Theorem~3.3]{HochbaumShanthikumar1990} proved
two-sided continuous--integer proximity for separable convex objectives;
see also~\cite{GranotSkorinProximity1990}. The transfer between
continuous problems through a common integer optimum appears
in~\cite[Theorem~3.8]{HochbaumShanthikumar1990}.
The outer algorithm builds on tight-coordinate revelation and linear
recovery in V\'egh's framework~\cite{Vegh2016}.

Karzanov and McCormick give polynomial algorithms for separable convex
optimization on unimodular linear spaces~\cite{KarzanovMcCormick1997}.
Klimm and Warode study the parametric computation of flows with
aggregate multicommodity costs using convex-programming
subproblems~\cite{KlimmWarode2022}. Our multicommodity result gives an
exact, strongly polynomial algorithm for a single instance.

\section{Structural tools and preprocessing}\label{sec:structure}
We first record the algorithmic interfaces, then establish the circuit
and exchange tools and reduce to the feasible, finite case.
\begin{enumerate}[label=(\roman*)]
\item An LP with an integer constraint matrix $D$ can be solved exactly with
$\poly(\text{dimensions},\enc{D})$ basic operations, independently of
the objective and right-hand side, with polynomial intermediate encodings
\cite{Tardos1986}.
In the feasible finite case, we obtain an optimal primal--dual pair.
Tardos's bound is polynomial in the dimensions and $\log\Delta(D)$,
where $\Delta(D)$ is the maximum of one and the absolute values of all
square subdeterminants of $D$. Hadamard's inequality gives the stated
encoding-length bound.
This guarantee also applies to LPs with variable bounds and free
variables, using unit coefficients for bounds and signed column copies
for free variables.
\item For $w\in\Q^d$ and an integer $T\ge2$, Frank--Tardos compression
returns $\bar w\in\Z^d$ in $\poly(d,\log T)$ basic operations such that
\begin{equation}\label{eq:FT}
 \operatorname{sign}(w^{\top}a)=\operatorname{sign}(\bar w^{\top}a)
 \quad(a\in\Z^d,\ \norm{a}_1\le T-1).
\end{equation}
One may take $\norm{\bar w}_\infty\le2^{4d^3}T^{d(d+2)}$, so its
encoding is $\poly(d,\log T)$. Intermediate encodings are polynomial
in the input~\cite{FrankTardos1987}.
\item A rational convex quadratic program can be solved exactly in time
polynomial in its complete binary encoding~\cite{Kozlov1980}.
In the feasible finite case, it returns a rational optimal point of
polynomial encoding length.
We use this algorithm on bounded instances with structurally bounded encoding.
Their integer data can be converted to binary by bounded
doubling and comparison. Simulating its bit algorithm and reconstructing
its rational output then takes polynomially many basic operations
in this structural encoding length.
\end{enumerate}

\subsection{Circuits}
We retain an independent subset of the original rows of $A$ and verify
consistency of $b$ with the deleted rows. Thus, unless infeasibility has
already been certified, $A$ has full row rank. If the rank is zero, the
problem separates into one-dimensional quadratics and linear functions.
Assume henceforth $r\ge1$. Define
\begin{equation}\label{eq:G}
 U=\max(1,\max_{i,e}|A_{ie}|),\qquad G=(rU)^r.
\end{equation}
A circuit of a matrix is a nonzero kernel vector with inclusion-minimal
support. A circuit of an integer matrix has a primitive integer representative.
By the cofactor construction on at most $r+1$ columns and Hadamard's
inequality, every circuit of $A$ has a representative $h$ satisfying
\begin{equation}\label{eq:circuitbound}
 1\le |h_e|\le G\quad(e\in\supp h).
\end{equation}
In particular, $\log G=O(r\log(rU))$.

Two vectors are sign-compatible if their coordinate products are nonnegative.
A vector $h$ is conformal to $v$ if it is sign-compatible with $v$ and its
support is contained in that of $v$.

We use the conformal decomposition of a linear subspace into elementary
vectors~\cite[Theorem~1]{Rockafellar1969}.

\begin{lemma}[Conformal decomposition]\label{lem:decomp}
Every nonzero kernel vector $v$ of a matrix $D$ admits a representation
$v=\sum_{i=1}^k\lambda_i h^i$, where $\lambda_i>0$, each $h^i$ is a
circuit conformal to $v$, and $k\le\dim\ker D$. The sum has no coordinate
cancellation. If $D$ is integer, the circuits can be taken primitive integer.
\end{lemma}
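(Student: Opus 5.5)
I would prove Lemma~\ref{lem:decomp} by a greedy peeling argument: repeatedly subtract a conformal circuit, scaled as far as possible without changing sign pattern, so that the support strictly shrinks. The output is then checked against each claim of the lemma.

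\textbf{Finding one conformal circuit.} Let $v\in\ker D$ be nonzero with support $S$. I first show that some circuit $h$ is conformal to $v$. Consider the polyhedral cone
\[
C=\{y\in\ker D:\ y_e=0\ (e\notin S),\ v_ey_e\ge0\ (e\in S)\}.
\]
It lies in an orthant, so it is pointed, and it contains $v\neq0$. Hence $C$ has an extreme ray $h$. I claim $h$ has inclusion-minimal support in $\ker D$. Suppose instead that some nonzero $g\in\ker D$ has $\supp g\subsetneq\supp h$. Choose $t\neq0$ of the appropriate sign so that $h-tg$ stays sign-compatible and gains a new zero. Then $h=(h-tg)+tg$, and if needed the same is done with $h+t'g$. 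This writes $h$ as a sum of two non-parallel elements of $C$, contradicting extremality. The sign of $t$ must be chosen so that both pieces lie in $C$; this works because $g$ and $-g$ can be scaled so that $h\pm sg\in C$ for small $s>0$. Thus $h$ is a circuit conformal to $v$.

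\textbf{Peeling and counting.} Set $\lambda=\min_{e\in\supp h}v_e/h_e>0$ and $v'=v-\lambda h$. By the choice of $\lambda$, $v'$ is conformal to $v$ and vanishes on at least one more coordinate of $S$. Iterating gives $v=\sum_i\lambda_ih^i$ with every $h^i$ conformal to $v$, so the sum has no cancellation. The bound $k\le\dim\ker D$ needs more care than the bare count $k\le|S|$. For this I use that $h^{j}$ is nonzero at the coordinate $e_i$ that is zeroed at step $i$, for every $j\le i$, while the later terms $h^{j}$ with $j>i$ are zero there. This makes $h^1,\dots,h^k$ triangular with respect to the coordinates $e_1,\dots,e_k$. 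Hence the vectors are linearly independent vectors of $\ker D$, which gives $k\le\dim\ker D$. I expect this independence argument to be the main point requiring care, so I would state the triangular structure explicitly.

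\textbf{Integrality.} If $D$ is integer, each circuit's support determines a one-dimensional kernel restricted to that support. That kernel is rational, so it contains a primitive integer vector, for instance the cofactor vector divided by the gcd of its entries. Rescaling $h^i$ to this primitive representative, with the sign fixed to keep conformality, only changes $\lambda_i$ by a positive factor.
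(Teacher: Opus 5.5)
Your proof is correct. The first two stages match the paper's: extreme rays of the pointed cone cut out of $\ker D$ by the orthant and support of $v$ are conformal circuits, and you peel off the largest feasible multiple of one.

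The genuine difference is how you get $k\le\dim\ker D$. The paper treats the peeling output as an arbitrary finite conformal decomposition and then prunes it Carath\'eodory-style. Any relation $\sum_i\alpha_ih^i=0$ must have coefficients of both signs, because all $h^i$ lie in one pointed orthant. Subtracting $\min_{\alpha_i>0}(\lambda_i/\alpha_i)$ times the relation therefore removes a term while keeping every coefficient nonnegative. You instead show that the greedy output is already linearly independent, via the triangular pattern on the zeroed coordinates $e_1,\dots,e_k$. Your route is shorter, and it shows that the paper's pruning step is never triggered by its own peeling procedure. The paper's route is more robust: it reduces any conformal decomposition, greedy or not, to one with at most $\dim\ker D$ terms.

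Two small repairs are needed.
\begin{enumerate}
\item The triangular claim should read: $h^i_{e_i}\neq0$, and $h^j_{e_i}=0$ for $j>i$. Your assertion that $h^j_{e_i}\neq0$ for every $j<i$ is false in general; circuits with disjoint supports give a counterexample. It is also not needed for triangularity.
\item In the extremality step, the split $h=(h-tg)+tg$ fails when $tg\notin C$. Use $h=\tfrac12(h+sg)+\tfrac12(h-sg)$ for small $s>0$ instead. Both summands lie in $C$. They are non-parallel because a relation $h+sg=\mu(h-sg)$ with $\mu>0$ would make $h$ a multiple of $g$, contradicting $\supp g\subsetneq\supp h$.
\end{enumerate}
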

\begin{proof}
Intersect $\ker D$ with the closed orthant and coordinate support of $v$.
This is a pointed cone whose extreme rays are circuits: a vector with a
non-minimal support admits a nontrivial decomposition within that orthant
by perturbing along a smaller-support kernel vector. Subtracting the
largest feasible multiple of an extreme ray eliminates at least one
nonzero coordinate. Repeating gives a finite conformal decomposition.

If its rays are linearly dependent, choose a relation
$\sum_i\alpha_i h^i=0$. The coefficients have both signs, because all
rays lie in the same pointed orthant. Subtract
$\min_{\alpha_i>0}(\lambda_i/\alpha_i)$ times this relation from the
coefficient vector. Coefficients remain nonnegative and at least one
vanishes. Repeat until the rays are independent. Their number is then
at most $\dim\ker D$.
\end{proof}

\subsection{Feasibility, unboundedness, and attainment}
Let $N=\{e:q_e>0\}$ and $Z=\{e:q_e=0\}$.
First solve the feasibility LP $Ax=b,x\ge0$. Infeasibility has a
Farkas certificate $A^{\top}y\ge0$, $b^{\top}y<0$.
If $N=\varnothing$, use the LP solver for the entire problem.
Otherwise test feasibility of
\begin{equation}\label{eq:lowerpotential}
 A_Z^{\top}\pi_0\le c_Z,\qquad \pi_0\text{ free}.
\end{equation}
If this system is infeasible, its Farkas certificate gives
$v_Z\ge0$, $A_Zv_Z=0$, $c_Z^{\top}v_Z<0$.
Extending it by $v_N=0$ certifies unboundedness of the QP from the
already computed feasible point. Both LP matrices contain only the
original integer matrix or its transpose, signed copies, and unit coefficients.

Suppose that~\eqref{eq:lowerpotential} is feasible. We show that the
objective is bounded below and attains its minimum.
Every feasible point satisfies
\begin{equation}\label{eq:coercive}
 \Phi(x)\ge\tfrac12\sum_{e\in N}q_ex_e^2+
 (c_N-A_N^{\top}\pi_0)^{\top}x_N+\pi_0^{\top}b.
\end{equation}
This is bounded below and coercive in $x_N$.
For fixed $u=x_N$, the linear-fiber value
\[
 v(u)=\min\{c_Z^{\top}z:A_Zz=b-A_Nu,\ z\ge0\}
\]
has a closed polyhedral epigraph and a closed feasible domain, by
polyhedral projection. It attains its value whenever feasible, with
the affine lower bound given by $\pi_0$. Thus
$\frac12u^{\top}Q_Nu+c_N^{\top}u+v(u)$, restricted to $u\ge0$, has a nonempty
compact sublevel set and attains its minimum.
A minimizing linear fiber proves attainment of
the QP. Thus, for a feasible QP, the objective is unbounded below if and
only if~\eqref{eq:lowerpotential} is infeasible.

If $m=r$, the equalities determine a unique feasible point. Rational
elimination returns it, with potential $\pi=(A^{\top})^{-1}(Qx+c)$.
Henceforth the instance is feasible and finite, and
\begin{equation}\label{eq:counts}
 1\le r<m,\qquad N\ne\varnothing.
\end{equation}
We use the following necessary and sufficient KKT conditions to certify
optimality and justify the nonnegativity relaxations in
Section~\ref{sec:algorithm}.
\begin{equation}\label{eq:KKT}
 Ax=b,\quad x\ge0,\quad \sigma=Qx+c-A^{\top}\pi\ge0,\quad x^{\top}\sigma=0.
\end{equation}
These multipliers follow from the normal-cone formula for a polyhedron.

Certificates are reported for the original equality system.
Multipliers for the retained rows are extended by zero on the deleted
rows. If a deleted equation is inconsistent with the retained equations,
its row-dependence relation gives a Farkas certificate, with the sign
chosen so that its inner product with the original right-hand side is
negative.

\subsection{Separable exchange}
We use the following exchange inequality to prove the two-sided
continuous--integer proximity bound in Section~\ref{sec:local}.

\begin{lemma}[Separable exchange]\label{lem:exchange}
Let $\psi(u)=\sum_e\psi_e(u_e)$ be separable convex. If $h$ is conformal
to $w-v$ and $|h_e|\le|w_e-v_e|$, then
\begin{equation}\label{eq:exchange}
 \psi(v+h)+\psi(w-h)\le\psi(v)+\psi(w).
\end{equation}
\end{lemma}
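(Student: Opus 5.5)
The inequality is separable, so I would prove it one coordinate at a time and sum the results. Fix a coordinate $e$. I need to show that
\[
\psi_e(v_e+h_e)+\psi_e(w_e-h_e)\le\psi_e(v_e)+\psi_e(w_e),
\]
and summing over $e$ then gives~\eqref{eq:exchange}.

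\textbf{Trivial coordinates.} If $h_e=0$, the two sides are identical. Conformality of $h$ to $w-v$ gives $\supp h\subseteq\supp(w-v)$, so the case $w_e=v_e$ forces $h_e=0$ and needs nothing more.

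\textbf{Nontrivial coordinates.} Otherwise $h_e\neq0$, and sign-compatibility says that $h_e$ has the same sign as $w_e-v_e$. Combined with the hypothesis $|h_e|\le|w_e-v_e|$, this means $h_e=\theta(w_e-v_e)$ for some $\theta\in(0,1]$. Hence
\[
v_e+h_e=(1-\theta)v_e+\theta w_e,\qquad w_e-h_e=\theta v_e+(1-\theta)w_e .
\]
Both new points are convex combinations of $v_e$ and $w_e$, with weights swapped between them.

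\textbf{Applying convexity.} Convexity of $\psi_e$ bounds the first term by $(1-\theta)\psi_e(v_e)+\theta\psi_e(w_e)$. It bounds the second by $\theta\psi_e(v_e)+(1-\theta)\psi_e(w_e)$. Adding these two bounds gives exactly $\psi_e(v_e)+\psi_e(w_e)$, which is the coordinate inequality.

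\textbf{Extended-valued $\psi_e$.} If some $\psi_e$ may take the value $+\infty$ (for instance, to encode bounds), the same convexity inequality still holds. When the right-hand side is finite, $v_e$ and $w_e$ lie in the convex domain of $\psi_e$, and so do both convex combinations, so every term is finite and the argument goes through unchanged.

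There is no real obstacle here. The only step that needs care is writing $h_e$ as a fraction $\theta\in(0,1]$ of $w_e-v_e$, which is where both the conformality hypothesis and the magnitude bound $|h_e|\le|w_e-v_e|$ are used.
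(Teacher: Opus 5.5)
Your proof is correct and follows the paper's argument: you write $h_e=\theta_e(w_e-v_e)$ with $\theta_e\in[0,1]$, observe that the exchanged coordinates are complementary convex combinations of $v_e,w_e$, add the two convexity inequalities, and sum over $e$, treating $w_e=v_e$ via the support containment. Your remark on extended-valued $\psi_e$ is a harmless addition that the paper does not need.
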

\begin{proof}
For $w_e\ne v_e$ put $\theta_e=h_e/(w_e-v_e)\in[0,1]$.
The two exchanged coordinates are complementary convex combinations
of $v_e,w_e$. Add their convexity inequalities and sum over $e$.
When $w_e=v_e$, also $h_e=0$.
\end{proof}
If $v$ and $w$ are optimal in their respective feasible sets and the
exchanged points remain feasible in the corresponding sets, both
objective changes are nonnegative. The exchange inequality makes their
sum nonpositive, so both changes are zero and the exchanged points
are optimal.

\section{A local solver via objective compression}\label{sec:local}
For any index set $F$, write $P_F(b)=\{x:Ax=b,\ x_{\bar F}\ge0\}$.
The local solver finds a point close to an optimum of the relaxed problem.
Let $a\in P_F(b)\cap\Q^m$, $\varepsilon\in\Q_{>0}$, and let $B\ge1$ be an integer.
Assume the box $|x-a|\le B\varepsilon\one$ contains an optimum on $P_F(b)$.

\begin{proposition}[Local near-point solver]\label{prop:local}
There is an algorithm that returns $x\in P_F(b)$ satisfying
\begin{equation}\label{eq:local}
 \norm{x-z}_\infty\le\varepsilon,
\end{equation}
for some optimum $z$ on $P_F(b)$. It uses
$\poly(r,m,\enc{A},\log B)$ basic operations,
independently of the encoding lengths of $a,\varepsilon,q,c$.
\end{proposition}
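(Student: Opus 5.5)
Translate and scale to a lattice problem, compress the objective, solve the compressed continuous QP, and transfer proximity back through a shared integer optimum, as in the technical overview.

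\textbf{Scaling.} Substitute $x=a+\delta y$, where $\delta=\varepsilon/s$ and $s$ is an integer scale to be fixed later (a structural multiple of $mG$). The constraints become $Ay=0$ and $y_e\ge -a_e/\delta$ for $e\in\bar F$. Restrict to the box $|y|\le sB$, which contains an optimum by assumption. For integer points we may round the lower bounds $\ell_e=\lceil -a_e/\delta\rceil$ inward, and clip them to $[-sB,sB]$ when they fall outside. This leaves a system $Ay=0$, $\ell\le y\le u$ with integer bounds of magnitude at most $sB$, so its encoding length is $\poly(\log s,\log B)$ plus the structural terms. The objective is $\tfrac12\delta^2\sum q_ey_e^2+\delta\sum(q_ea_e+c_e)y_e$. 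Every integer feasible $y$ satisfies $|y_e|\le sB$. The difference of objective values at two such integer points is $w^{\top}\alpha$, where $w=(\tfrac12\delta^2 q,\ \delta(Qa+c))\in\Q^{2m}$ and $\alpha=(y^2-y'^2,\ y-y')$ has $\norm{\alpha}_1\le 2m(sB)^2+2msB$.

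\textbf{Compression.} Apply Frank--Tardos compression (ii) with $T=4m(sB)^2$, obtaining $\bar w=(\bar q/2,\bar c)$. By \eqref{eq:FT}, the compressed separable objective orders integer feasible points of the box exactly as the true objective does, so the two share the same integer optima. The compressed QP has encoding $\poly(m,\log s,\log B)$ together with the structure. We solve it with (iii), obtaining a continuous optimum $\bar y$, and map back to $x=a+\delta\bar y$. This $x$ lies in $P_F(b)$ because the lower bounds were only rounded inward. Note also that $\bar q\ge0$: if $q_e=0$, then $\bar q_e=0$, and if $q_e>0$, sign preservation on unit vectors gives $\bar q_e>0$, so the compressed problem is still a separable convex QP. Each step costs $\poly(r,m,\enc{A},\log B)$ operations, since neither $q$, $c$, $a$ nor $\varepsilon$ enters the operation count beyond the compression input.

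\textbf{Proximity transfer (main obstacle).} We need two-sided continuous--integer proximity for separable convex objectives over $\{Ay=0,\ \ell\le y\le u\}$. For each continuous optimum $y^*$ there should be an integer optimum within $\ell_\infty$ distance $mG$, and for each integer optimum there should be a continuous optimum within distance $mG$. The plan for proving this uses Lemma~\ref{lem:decomp} on $y^*-\hat y$, where $\hat y$ is the nearest integer optimum. Take the conformal circuits $h^i$ with coefficients $\lambda_i$, and write $\lambda_i=\lfloor\lambda_i\rfloor+\{\lambda_i\}$. Then $\hat y+\sum\lfloor\lambda_i\rfloor h^i$ is integer, and it lies in the box by conformality. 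Lemma~\ref{lem:exchange} then shows that both exchanged points remain optimal. Choosing $\hat y$ nearest forces all $\lfloor\lambda_i\rfloor=0$, which gives distance at most $\sum_i\{\lambda_i\}\norm{h^i}_\infty\le mG$. The converse direction is symmetric. A subtle point is the clipping. The true continuous optimum $z$ lies in the box only with real bounds $-a_e/\delta$, while the integer problem uses $\lceil\cdot\rceil$. To handle this, I would first relate the real-bounded and rounded-bounded continuous problems, or else apply the exchange directly between the real-bounded continuous set and the integer set, which is contained in it. Because the rounding is inward, the integer set is a subset of the continuous one, so the one-sided exchange still goes through.

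\textbf{Conclusion.} Chain the bounds: $z$ is within $mG$ of some integer optimum $\hat y$, which is shared by both objectives, and $\hat y$ is within $mG$ of some continuous optimum of the compressed problem. One difficulty remains: the solver returns some compressed optimum $\bar y$, not necessarily the one adjacent to $\hat y$. I would handle this by running the chain in the opposite direction. Start from $\bar y$, take an integer compressed optimum within $mG$ of it, which is also a true integer optimum, and then a true continuous optimum $z'$ within $mG$ of that. This gives $\norm{\bar y-z'}_\infty\le 2mG$. Finally, taking $s=2mG$ gives $\norm{x-(a+\delta z')}_\infty\le\varepsilon$. It remains to check that the box constraint is not active at $z'$ in a way that breaks optimality on $P_F(b)$. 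I would handle this by enlarging the box to $(B+2)$ times the radius, so that convexity makes box-optimality imply global optimality.
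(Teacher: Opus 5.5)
This is essentially the paper's proof. You use the same scaling $\delta=\varepsilon/(2mG)$ with radius $R=2mGB$, and ceiling rounding that leaves the integer points unchanged; the paper computes these ceilings by binary search on $[-R,0]$, which keeps the count independent of $a,\varepsilon$. You then apply Frank--Tardos to the $2m$ features with unit-vector sign preservation and follow the same chain $\bar y\to$ shared integer optimum $\to$ continuous optimum on the unrounded box. Your aggregated exchange with $\sum_i\lfloor\lambda_i\rfloor h^i$, taking the $\ell_1$-nearest optimum, is an equivalent variant of the paper's single-circuit exchange.

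Your last worry needs no box enlargement and no local-to-global convexity argument, and the latter would not by itself handle a box constraint active at $z'$. The unrounded box contains a global optimum of $P_F(b)$, so the box minimum equals the global minimum. Hence every box optimum, in particular $a+\delta z'$, is a global optimum.
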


The construction of this local solver was discovered by GPT-6.1 Sol,
as described in the AI methodology statement.

\paragraph{Normalization and rounding.}
Set $M=2mG$, $\delta=\varepsilon/M$, and $R=MB$, so that
$R\delta=B\varepsilon$ and $\log R$ is polynomially bounded in the
structural input and $\log B$.
Write $u=(x-a)/\delta$. Then $Au=0$, $|u|\le R\one$, and for
$e\notin F$ also $u_e\ge-a_e/\delta$.
Let $\ell_e$ be the maximum of $-R$ and this lower bound for $e\notin F$,
and set $\ell_e=-R$ on $F$. Each $\ell_e\le0$. Replace $\ell_e$ by its
ceiling $\bar\ell_e$, which changes it by less than one and keeps $0$
feasible. The rounding step takes $O(m\log(R+1))$ basic operations by binary
search on the integer interval $[-R,0]$.

\paragraph{Continuous--integer proximity.}
We use the following box version of the two-sided proximity
theorem~\cite[Theorem~3.3]{HochbaumShanthikumar1990}.

\begin{lemma}[Two-sided integer proximity]\label{lem:integer}
Let $P=\{u:Au=\beta,\ \ell\le u\le h\}$ be bounded and let
$P\cap\Z^m\ne\varnothing$. The bounds need not be integral.
For any separable convex quadratic $\psi$, every continuous optimum
has an integer optimum within $\ell_\infty$ distance $mG$, and
every integer optimum has a continuous optimum within the same distance.
\end{lemma}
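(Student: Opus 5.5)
We prove both directions by the standard exchange argument of Hochbaum--Shanthikumar: conformal decomposition (Lemma~\ref{lem:decomp}) combined with the separable exchange inequality (Lemma~\ref{lem:exchange}). Both optimization problems attain their optima. The continuous problem is a convex quadratic program over the compact polytope $P$. The integer problem minimizes over the finite nonempty set $P\cap\Z^m$.

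\emph{Continuous to integer.} Let $x^*$ be a continuous optimum. Among all integer optima $z$, choose one minimizing $\norm{z-x^*}_1$. Suppose $\norm{z-x^*}_\infty>mG$. The vector $v=z-x^*$ lies in $\ker A$, so Lemma~\ref{lem:decomp} writes it as $v=\sum_{i=1}^k\lambda_ih^i$. Here each $h^i$ is a primitive integer circuit conformal to $v$, each $\lambda_i>0$, and $k\le m-r<m$. Pick a coordinate $e$ with $|v_e|>mG$. Then
\[
 mG<|v_e|=\sum_i\lambda_i|h^i_e|\le G\sum_{i:\,e\in\supp h^i}\lambda_i,
\]
so some index $i$ has $\lambda_i>m/k\ge1$. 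Set $h=h^i$. Since $\lambda_i\ge1$ and there is no cancellation, we have $|h_e|\le|v_e|$ for every $e$, and $h$ is conformal to $v$. We apply Lemma~\ref{lem:exchange} with $v\leftarrow x^*$ and $w\leftarrow z$. This compares the two points $x^*+h$ and $z-h$.
\begin{itemize}[label=--]
\item Feasibility: both points satisfy the equations, because $Ah=0$. Each coordinate of each point lies between $x^*_e$ and $z_e$, so both points satisfy the box bounds $\ell\le\cdot\le h$; this also works for nonintegral bounds. In addition, $z-h$ is integer, because $h$ is integer.
\item Optimality: the remark after Lemma~\ref{lem:exchange} shows that $z-h$ is an integer optimum.
\item Contradiction: $z-h$ is strictly $\ell_1$-closer to $x^*$ than $z$, which contradicts the choice of $z$.
\end{itemize}

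\emph{Integer to continuous.} Let $z$ be an integer optimum. Among continuous optima $x$, take one minimizing $\norm{x-z}_1$; such a minimizer exists because the continuous optimal set is compact. The argument is symmetric. We obtain $h$ conformal to $x-z$, with $|h|\le|x-z|$, integer, and a circuit. Then $z+h$ is integer and feasible, and $x-h$ is continuous feasible. The exchange inequality makes both of them optimal, and $x-h$ is closer to $z$, a contradiction.

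\emph{Main obstacle.} The delicate step is extracting a single circuit coefficient $\lambda_i\ge1$. This is what keeps the exchanged integer point integral. The bound $k\le\dim\ker A\le m$ from Lemma~\ref{lem:decomp} together with $|h_e|\le G$ from \eqref{eq:circuitbound} gives exactly the threshold $mG$, so the counting must use strict inequality at $|v_e|>mG$.

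A second point needs care: the exchanged points must stay inside the possibly fractional box. This follows because conformality with $|h_e|\le|v_e|$ keeps every coordinate between the two endpoints, without any integrality of $\ell$ or $h$.
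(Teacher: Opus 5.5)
Your proof is correct and follows the paper's argument. You fix one optimum, take an $\ell_1$-nearest optimum of the other kind, decompose the difference conformally into primitive integer circuits, and use the separable exchange inequality to show that any coefficient $\lambda_i\ge1$ produces a strictly closer optimum. The only difference is cosmetic: you argue by contradiction from $\norm{z-x^*}_\infty>mG$ with a pigeonhole count, while the paper shows directly that every $\lambda_i<1$ and then sums $\lambda_i\norm{h^i}_\infty$.
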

\begin{proof}
Fix one of the two kinds of optimum, and select a nearest $\ell_1$
optimum of the other kind. The finite integer feasible set and compact
continuous optimal set ensure existence. Write the continuous point as
$v$ and the integer point as $w$. Decompose
$w-v=\sum_{i=1}^k\lambda_i h^i$ conformally into primitive integer
circuits, with $k\le m$. If $\lambda_i\ge1$, put $h=h^i$.
Both $v+h$ and $w-h$ lie coordinatewise between $v,w$, so they satisfy
the same equalities and box bounds; $w-h$ remains integer.
Lemma~\ref{lem:exchange} and continuous and integer optimality show
that both exchanged points are optimal in their respective problems.
If $v$ was fixed, $w-h$ is a closer integer optimum; if $w$ was fixed,
$v+h$ is a closer continuous optimum. Both contradict the selection.
Thus every $\lambda_i<1$, giving
$\norm{w-v}_\infty\le\sum_i\lambda_i\norm{h^i}_\infty\le mG$.
\end{proof}

\paragraph{Compression and the local algorithm.}
Let $P$ denote the scaled box before rounding, and $\bar P$ the box with
lower bounds $\bar\ell=\lceil\ell\rceil$. Rounding preserves integer
feasible points, and both boxes contain zero. Hence
\begin{equation}\label{eq:sameinteger}
 P\cap\Z^m=\bar P\cap\Z^m\ne\varnothing.
\end{equation}
After substituting $x=a+\delta u$ and subtracting the constant $\Phi(a)$,
the objective has coefficient vector
\begin{equation}\label{eq:features}
 \theta=\left((q_e\delta^2/2)_{e=1}^m,
              (\delta(q_ea_e+c_e))_{e=1}^m\right)
\end{equation}
against features $((u_e^2)_{e=1}^m,(u_e)_{e=1}^m)$.
The compression dimension is $2m$. For integer $u,v\in[-R,R]^m$, their
feature difference has $\ell_1$ norm at most $mR^2+2mR$.
Apply Frank--Tardos with
\begin{equation}\label{eq:FTparameter}
 T=1+mR^2+2mR.
\end{equation}
This preserves all comparisons and ties on the integer box, and therefore
the integer optimal set on~\eqref{eq:sameinteger}.
Unit-vector tests in~\eqref{eq:FT} preserve the sign of every quadratic
coefficient, including zeros. The compressed objective is therefore
still separable and convex.

Solve the compressed continuous QP on $\bar P$, and let its
solution be $\bar u$. Its complete encoding is polynomial in
$r,m,\enc{A},\log R$: its demand is zero, its bounds are integers of
magnitude at most $R$, and its objective is compressed. Return
$x=a+\delta\bar u$.

\begin{proof}[Proof of Proposition~\ref{prop:local}]
Lemma~\ref{lem:integer} puts $\bar u$ within $mG$ of a compressed
integer optimum $w$. Comparison and tie preservation, together with
\eqref{eq:sameinteger}, make $w$ an integer optimum for the uncompressed
objective on the \emph{unrounded} box $P$. Apply the reverse direction
of the lemma to this objective on $P$: $w$ is within $mG$ of a continuous
optimum $u^*$ there. The unrounded box contains a global optimum of the
current relaxed problem, so $a+\delta u^*$ is such a global optimum.
Moreover, $\bar P\subseteq P$, ensuring feasibility of $x$.
The triangle inequality and rescaling give distance at most
$2mG\delta=\varepsilon$, proving~\eqref{eq:local}.
The standard interfaces and compressed encoding give the operation count.
\end{proof}

\section{Coordinate revelation and recovery}\label{sec:algorithm}
We use the local solver to reveal a new tight coordinate on each call,
then recover an optimal primal--dual pair by linear programming.
For the analysis, fix an optimal primal--dual pair $(x^*,\pi^*)$ of the
original problem. The algorithm does not need to know this pair.
Put
$\sigma^*=Qx^*+c-A^{\top}\pi^*\ge0$ and $T^*=\{e:\sigma_e^*=0\}$.
The algorithm maintains a growing set $F\subseteq T^*$ and removes
the nonnegativity constraints on its coordinates. It therefore works on
$P_F(b)=\{x:Ax=b,\ x_{\bar F}\ge0\}$. For every $x\in P_F(b)$,
\begin{equation}\label{eq:safeidentity}
 \Phi(x)-\Phi(x^*)=
 \tfrac12\sum_{e\in N}q_e(x_e-x_e^*)^2+
 \sum_{e\notin F}\sigma_e^*x_e\ge0.
\end{equation}
Thus the relaxed minimum equals the original minimum. Every relaxed
optimum $z$ has $z_N=x_N^*$, and $z_e>0$ outside $F$ implies $e\in T^*$.
Write
\[
 \kappa=mG,\qquad \rho_F(z)=\max_{e\notin F}z_e,
\]
with an empty maximum taken as zero.

\subsection{Small external coordinates and stationarity on \texorpdfstring{$F$}{F}}\label{sec:trial}
We construct a trial point with small external coordinates that is
optimal on the affine fiber obtained by fixing them.
First minimize the largest external coordinate:
\begin{equation}\label{eq:smalloutside}
\begin{aligned}
 \alpha_F=\min_{u,\alpha}\quad&\alpha\\
 \text{subject to}\quad&Au=b,\qquad
 0\le u_{\bar F}\le\alpha\one,\qquad \alpha\ge0,\\
 &u_F\text{ free}.
\end{aligned}
\end{equation}
This LP is feasible and has a finite minimum. Its constraint matrix
contains only $A$ and entries in $\{0,\pm1\}$. Choose a basic optimal
solution after standardization, and let $v=u_{\bar F}$. For every
$z\in P_F(b)$,
\begin{equation}\label{eq:alphabound}
 0\le v\le\alpha_F\one,\qquad \alpha_F\le\rho_F(z),
\end{equation}
since $(z,\rho_F(z))$ is feasible in~\eqref{eq:smalloutside}.
In particular, $\alpha_F=0$ if $\bar F$ is empty.

Fix the external coordinates and minimize over the free coordinates
by solving the linear system
\begin{equation}\label{eq:trial}
 t_{\bar F}=v,\qquad A_Ft_F=b-A_{\bar F}v,\qquad
 Q_Ft_F+c_F=A_F^{\top}\lambda.
\end{equation}
We use rational elimination, setting free unknowns to zero.
To prove consistency, put $\beta=b-A_{\bar F}v\in\im A_F$.
On the affine fiber $A_Fw=\beta$, the objective
$\Phi_F(w)=\tfrac12w^{\top}Q_Fw+c_F^{\top}w$ equals
\[
 \tfrac12\norm{Q_F^{1/2}(w-x_F^*)}_2^2+
 (\pi^*)^{\top}\beta-\tfrac12(x_F^*)^{\top}Q_Fx_F^*,
\]
because $F\subseteq T^*$. The image of the nonempty affine fiber under
$w\mapsto Q_F^{1/2}(w-x_F^*)$ is an affine subspace.
Its minimum-norm point exists, so the displayed objective attains its
minimum.
Stationarity gives~\eqref{eq:trial}. Thus $t\in P_F(b)$ and
$0\le t_{\bar F}\le\alpha_F\one$.

\subsection{Descent over the recession cone}\label{sec:descent}
Put $g=Qt+c$ and compute the optimal value of the following descent LP
over the recession cone:
\begin{equation}\label{eq:descentLP}
\begin{aligned}
 V_F=\max_d\quad&-g^{\top}d\\
 \text{subject to}\quad&Ad=0,\qquad d_{\bar F}\ge0,\\
 &-1/q_e\le d_e\le1/q_e\qquad(e\in N).
\end{aligned}
\end{equation}
The bounds $|q_ed_e|\le1$ on $N$ allow us to bound the descent value
by an unweighted $\ell_1$ distance to an optimum on these coordinates.
Coordinates in $Z$ satisfy only the equality and sign constraints.
The constraint matrix contains only $A$ and unit coefficients;
$q$ enters the right-hand side and the objective.

Every feasible $d$ is a recession direction of $P_F(b)$. Hence for
any relaxed optimum $z$, first-order optimality gives
$\nabla\Phi(z)^{\top}d\ge0$, and
\[
 -g^{\top}d\le (Q(z-t))^{\top}d\le\sum_{e\in N}|z_e-t_e|.
\]
Since $d=0$ is feasible, the LP attains a finite optimum satisfying
\begin{equation}\label{eq:descentbound}
 0\le V_F\le\norm{(z-t)_N}_1
 \qquad\text{for every relaxed optimum }z.
\end{equation}
Define the scale
\begin{equation}\label{eq:scale}
 E_F=\max(\alpha_F,V_F).
\end{equation}

\subsection{Coverage and progress}\label{sec:scale}
The same scale $E_F$ provides both a box containing an optimum and
a guarantee that each local call reveals a new coordinate.

\begin{lemma}[Covering bound]\label{lem:cover}
There is an optimum $z_0$ on $P_F(b)$ with
\begin{equation}\label{eq:cover}
 \norm{z_0-t}_\infty\le\kappa E_F.
\end{equation}
\end{lemma}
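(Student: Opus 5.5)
The plan is to take $z_0$ to be a relaxed optimum on $P_F(b)$ that is nearest to $t$ in $\ell_1$. Such a point exists because the relaxed optimal set is nonempty (by~\eqref{eq:safeidentity}), closed and convex. If $z_0=t$ there is nothing to prove. Otherwise I decompose $z_0-t\in\ker A$ by Lemma~\ref{lem:decomp}:
\[
z_0-t=\sum_{i=1}^k\lambda_ih^i,\qquad k\le m,
\]
with primitive integer circuits $h^i$ conformal to $z_0-t$ and no cancellation. By~\eqref{eq:circuitbound}, it suffices to show $\lambda_i\le E_F$ for every $i$, since then
\[
\norm{z_0-t}_\infty\le\sum_i\lambda_iG\le mGE_F=\kappa E_F.
\]

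Fix a circuit $h=h^i$ with coefficient $\lambda=\lambda_i$. Because there is no cancellation, $\lambda|h_e|\le|z_0-t|_e$ for every $e$, with matching signs. Hence for $s\in[0,\lambda]$ the points $z_0-sh$ lie coordinatewise between $t$ and $z_0$, and therefore belong to $P_F(b)$. The argument splits into three cases according to the sign pattern of $h$ on $\bar F$ and its support on $N$.

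\textbf{Case (a): $h_e<0$ for some $e\notin F$.} Then $(z_0-t)_e<0$, so $\lambda\le\lambda|h_e|\le t_e-(z_0)_e\le t_e\le\alpha_F$, using $(z_0)_e\ge0$ and $t_{\bar F}=v\le\alpha_F\one$.

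\textbf{Case (b): $h_{\bar F}\ge0$ and $h_N\ne0$.} Now $\pm h$ are both feasible directions at $z_0$: the direction $+h$ is a recession direction of $P_F(b)$, and $-h$ is feasible along the segment above. First-order optimality of $z_0$ therefore gives $\nabla\Phi(z_0)^{\top}h=0$. Writing $g=\nabla\Phi(z_0)-Q(z_0-t)$ and using conformality, which gives $(z_0-t)_eh_e\ge\lambda h_e^2$, we obtain
\[
-g^{\top}h=(Q(z_0-t))^{\top}h\ge\lambda\,h^{\top}Qh.
\]
Choose $f\in N$ maximizing $q_f|h_f|$, and set $d=h/(q_f|h_f|)$. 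This $d$ is feasible in~\eqref{eq:descentLP}, so
\[
V_F\ge-g^{\top}d\ge\lambda\,\frac{q_fh_f^2}{q_f|h_f|}=\lambda|h_f|\ge\lambda.
\]

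\textbf{Case (c): $h_{\bar F}\ge0$ and $h_N=0$.} The same two-sided argument gives $\nabla\Phi(z_0)^{\top}h=0$. Since $h$ has no curvature, $\Phi(z_0-\lambda h)=\Phi(z_0)$. So $z_0-\lambda h$ is a relaxed optimum strictly closer to $t$ in $\ell_1$, which contradicts the choice of $z_0$. Hence no such circuit occurs in the decomposition.

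Combining the three cases gives $\lambda_i\le\max(\alpha_F,V_F)=E_F$ for all $i$, which proves~\eqref{eq:cover}.

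I expect case (b) to be the main obstacle: one has to extract exactly $\lambda\le V_F$, rather than $\lambda\le2V_F$, from the descent LP. The key point that makes this work is that $-h$ is a feasible direction at $z_0$, so the directional derivative vanishes rather than merely being nonnegative; the scaling $d=h/\max_{e\in N}q_e|h_e|$ then matches the box constraints of~\eqref{eq:descentLP}. It also has to be checked that the case split is exhaustive and that nearest-point selection is used only in case (c).
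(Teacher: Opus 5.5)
Your proof is correct and follows the paper's argument essentially step for step. It uses the same nearest-$\ell_1$ optimum $z_0$, the same conformal circuit decomposition of $z_0-t$, and the same three-way case split, bounding each $\lambda_i$ by $\alpha_F$ or by $V_F$, or excluding the circuit by minimality of $z_0$. The differences are cosmetic: you normalize by $\max_{e\in N}q_e|h_e|$ instead of $s=h^{\top}Qh$. In case (b) you also invoke the two-sided identity $\nabla\Phi(z_0)^{\top}h=0$, whereas the one-sided inequality from feasibility of $-h$, which is all the paper uses, already gives $-g^{\top}h\ge\lambda h^{\top}Qh$.
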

\begin{proof}
Choose $z_0$ minimizing $\norm{z_0-t}_1$ in the relaxed optimal set.
This choice exists because the set is nonempty and closed and distance
sublevel sets are compact. If $z_0=t$, the bound is immediate.
Otherwise, Lemma~\ref{lem:decomp} gives a conformal decomposition
\[
 w=z_0-t=\sum_{i=1}^k\lambda_i h^i,\qquad k\le m,
\]
with primitive integer circuits $h^i$ satisfying
$\norm{h^i}_\infty\le G$. We show that each $\lambda_i\le E_F$.
Fix a term $\lambda h$.

If $h_e<0$ for some $e\notin F$, conformality gives
$\lambda|h_e|\le t_e-z_{0,e}\le\alpha_F$.
Since $|h_e|\ge1$, we have $\lambda\le\alpha_F$.
Otherwise $h_{\bar F}\ge0$. A sufficiently small step from $z_0$
along $-h$ stays coordinatewise between $z_0$ and $t$, so it is feasible.
Optimality and conformality imply
\begin{equation}\label{eq:circuitdescent}
 -g^{\top}h\ge h^{\top}Qw\ge\lambda h^{\top}Qh.
\end{equation}
If $s=h^{\top}Qh>0$, integer coordinates give
$q_e|h_e|\le q_eh_e^2\le s$ for every $e$.
Thus $h/s$ is feasible in~\eqref{eq:descentLP}, and
\eqref{eq:circuitdescent} yields $V_F\ge\lambda$.
If $s=0$, then $Qh=0$. Since $h$ is a recession direction and the
relaxed minimum is finite, $c^{\top}h\ge0$; optimality along $-h$ gives
$c^{\top}h\le0$. The objective is therefore constant along $h$.
A small step from $z_0$ along $-h$ would be optimal and strictly closer
to $t$, a contradiction. We conclude that all $\lambda_i\le E_F$,
and hence $\norm{z_0-t}_\infty\le mGE_F=\kappa E_F$.

\end{proof}

\begin{lemma}[Progress bound]\label{lem:progress}
Every optimum $z$ on $P_F(b)$ satisfies
\begin{equation}\label{eq:external}
 E_F\le m\kappa\,\rho_F(z).
\end{equation}
\end{lemma}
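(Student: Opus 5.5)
The plan is to bound the two ingredients of $E_F=\max(\alpha_F,V_F)$ separately. The first is already done: \eqref{eq:alphabound} gives $\alpha_F\le\rho_F(z)\le m\kappa\,\rho_F(z)$. For the second, \eqref{eq:descentbound} gives $V_F\le\norm{(z'-t)_N}_1$ for \emph{every} relaxed optimum $z'$. So it suffices to produce a relaxed optimum $z'$ with $\rho_F(z')=\rho_F(z)$ and $\norm{(z'-t)_N}_1\le m\kappa\,\rho_F(z)$. I would build $z'$ from $z$ by removing the part of $z-t$ carried by circuits supported inside $F$.

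\textbf{Decomposition.} Let $w=z-t$; if $w=0$ then $V_F=0$ and there is nothing to prove. Apply Lemma~\ref{lem:decomp} to get $w=\sum_{i=1}^k\lambda_ih^i$ with $k\le m$, primitive integer circuits, and no cancellation. Split the sum as $w=w'+w''$, where $w''$ collects the terms with $\supp h^i\subseteq F$ and $w'$ collects the rest.

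\textbf{Terms in $w'$.} Each such term has some $e\notin F$ with $h^i_e\ne0$. By conformality and $|h^i_e|\ge1$,
\[
\lambda_i\le|z_e-t_e|\le\max(z_e,t_e)\le\max(\rho_F(z),\alpha_F)=\rho_F(z),
\]
using $z_e,t_e\ge0$, $t_{\bar F}\le\alpha_F\one$ and \eqref{eq:alphabound}.

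\textbf{Removing $w''$ (main step).} Put $z'=z-w''$ and use Lemma~\ref{lem:exchange} with $v=t$, $w=z$, $h=w''$. Since the decomposition has no cancellation, $w''$ is conformal to $z-t$ and satisfies $|w''_e|\le|z_e-t_e|$. Because $w''$ is supported in $F$ and lies in $\ker A$, two feasibility facts hold:
\begin{itemize}
\item $t+w''$ stays in the affine fiber with external coordinates fixed at $v$;
\item $z-w''$ keeps the external coordinates of $z$, so it lies in $P_F(b)$.
\end{itemize}
The point $t$ is optimal on its fiber, by the stationarity system \eqref{eq:trial} and convexity, so $\Phi(t+w'')\ge\Phi(t)$. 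The point $z$ is a relaxed optimum, so $\Phi(z-w'')\ge\Phi(z)$. The exchange inequality makes the sum of these two changes nonpositive, so both are zero. Hence $z'$ is a relaxed optimum, with $z'_{\bar F}=z_{\bar F}$ and therefore $\rho_F(z')=\rho_F(z)$.

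\textbf{Conclusion.} Now $z'-t=w'$, so each coordinate satisfies $|(z'-t)_e|\le\sum_i\lambda_i\norm{h^i}_\infty\le mG\rho_F(z)=\kappa\rho_F(z)$. Summing over at most $m$ coordinates in $N$ gives $\norm{(z'-t)_N}_1\le m\kappa\,\rho_F(z)$. Combined with \eqref{eq:descentbound} this yields $V_F\le m\kappa\,\rho_F(z)$, and with the bound on $\alpha_F$ it proves \eqref{eq:external}.

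The delicate step is the exchange argument for $w''$. It needs the full sum $w''$, not single circuits, to be conformal and dominated, and this is exactly what the no-cancellation property of Lemma~\ref{lem:decomp} provides. It also uses that $t$ is optimal on its fiber and not merely stationary; this follows from convexity of $\Phi_F$ on the affine fiber.
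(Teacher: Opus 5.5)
Your proof is correct, but it handles the circuits supported in $F$ differently from the paper.

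\textbf{The paper's route.} It keeps the given optimum $z$ and argues circuit by circuit. For $h^i$ supported in $F$, both $\pm h^i$ are feasible directions at $t$ (within its fiber) and at $z$ (within $P_F(b)$), so $\nabla\Phi(t)^{\top}h^i=\nabla\Phi(z)^{\top}h^i=0$. Subtracting and using conformality gives $0=(h^i)^{\top}Q(z-t)\ge\lambda_i(h^i)^{\top}Qh^i\ge0$. Hence $Qh^i=0$, so these circuits vanish on $N$. This yields $\norm{(z-t)_N}_\infty\le\kappa\rho_F(z)$ directly, and \eqref{eq:descentbound} is applied to $z$ itself.

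\textbf{Your route.} You aggregate the $F$-supported part $w''$ and use the exchange inequality of Lemma~\ref{lem:exchange} to strip it from $z$. This produces a second relaxed optimum $z'$ with the same external coordinates, and you apply \eqref{eq:descentbound} to $z'$.

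\textbf{Comparison.} Your argument replaces the first-order, quadratic-specific computation with the same optimality-plus-separable-convexity mechanism used in Lemma~\ref{lem:integer}. Its cost is that you must invoke \eqref{eq:descentbound} for an optimum other than the given $z$. The paper's argument stays with $z$ and obtains the flatness $Qh^i=0$ of each $F$-supported circuit without further input. The two are closer than they look. By \eqref{eq:safeidentity}, all relaxed optima share their $N$-coordinates, so $z'_N=z_N$ and $w''_N=0$. By no cancellation, this forces every $F$-supported circuit to vanish on $N$. So your detour through $z'$ changes nothing on $N$; it is valid but not actually needed.
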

\begin{proof}
Fix any relaxed optimum $z$. By Lemma~\ref{lem:decomp}, write a conformal
decomposition
\[
 z-t=\sum_{i=1}^k\lambda_i h^i,\qquad \lambda_i>0,\qquad k\le m,
\]
where the $h^i$ are primitive integer circuits with
$\norm{h^i}_\infty\le G$, using the empty sum if $z=t$.
If a circuit $h^i$ is supported in $F$,
both $h^i$ and $-h^i$ are feasible directions within the respective
affine fibers through $t$ and $z$. Stationarity at $t$ and optimality
of $z$ for the relaxed problem give
\[
 \nabla\Phi(t)^{\top}h^i=\nabla\Phi(z)^{\top}h^i=0.
\]
Subtracting and using conformality, we obtain
\[
 0=(h^i)^{\top}Q(z-t)\ge\lambda_i(h^i)^{\top}Qh^i\ge0.
\]
Thus $Qh^i=0$, so this circuit contributes nothing on $N$.
Every other circuit has a nonzero coordinate $e\notin F$.
By~\eqref{eq:alphabound}, both $z_e$ and $t_e$ lie in
$[0,\rho_F(z)]$. Conformality and integer coordinates give
$\lambda_i\le\lambda_i|h^i_e|\le|z_e-t_e|\le\rho_F(z)$.
Summing the contributions on $N$ gives
\[
 \norm{(z-t)_N}_\infty\le mG\rho_F(z)=\kappa\rho_F(z).
\]
Now~\eqref{eq:descentbound} gives $V_F\le m\kappa\rho_F(z)$.
Together with $\alpha_F\le\rho_F(z)$, this proves~\eqref{eq:external}.
\end{proof}

\subsection{Revelation and recovery}\label{sec:recovery}
Starting with $F=\varnothing$, repeat the following steps:
\begin{enumerate}[label=\arabic*.]
\item Solve~\eqref{eq:smalloutside}, construct $t$ by~\eqref{eq:trial},
and solve~\eqref{eq:descentLP}. Set $E_F=\max(\alpha_F,V_F)$.
\item If $E_F=0$, perform the recovery below and return.
\item Set $\varepsilon=E_F/(4m\kappa)$ and $B=4m\kappa^2$.
Apply Proposition~\ref{prop:local} with center $t$, accuracy
$\varepsilon$, and radius $B\varepsilon=\kappa E_F$.
\item For the returned point $x$, update
\begin{equation}\label{eq:reveal}
 F\ \leftarrow\ F\cup\{e\notin F:x_e>\varepsilon\}.
\end{equation}
\end{enumerate}

Lemma~\ref{lem:cover} ensures that every local box contains a relaxed
optimum. The local solver returns $x$ within distance $\varepsilon$
of some relaxed optimum $z$, which satisfies, by Lemma~\ref{lem:progress},
\[
 \rho_F(z)\ge E_F/(m\kappa)=4\varepsilon.
\]
Thus some external coordinate has $x_e\ge3\varepsilon>\varepsilon$
and is added. Every added coordinate $j$ has
$z_j\ge x_j-\varepsilon>0$, so~\eqref{eq:safeidentity} gives
$\sigma_j^*=0$. The invariant is preserved and $F$ strictly grows.
There are at most $m$ local calls. If $F$ contains every coordinate,
\eqref{eq:external} gives $E_F=0$.

\paragraph{Recovery by linear programming.}
If $E_F=0$, then $\alpha_F=V_F=0$ and $t_{\bar F}=0$.
For any $u\in P_F(b)$, the displacement $u-t$ satisfies
$A(u-t)=0$ and $(u-t)_{\bar F}\ge0$. A sufficiently small positive
multiple is feasible in~\eqref{eq:descentLP}, so $V_F=0$ gives
$g^{\top}(u-t)\ge0$. Convexity implies that $t$ is optimal for the
relaxed problem.
By~\eqref{eq:safeidentity}, $t_N=x_N^*\ge0$.

Fix $x_N=t_N$ and solve the LP
\begin{equation}\label{eq:recovery}
 \min\{c_Z^{\top}x_Z:A_Zx_Z=b-A_Nt_N,\ x_Z\ge0\}.
\end{equation}
Fixing these coordinates retains the original optimum $x^*$, and
the remaining objective is linear. Thus this LP is feasible and finite,
and any optimum, combined with $x_N=t_N$, is an original optimum $x$.
If $Z=\varnothing$, take $x=t$.

Finally, compute an optimality certificate by solving the linear
feasibility problem in the free variable $\pi$:
\begin{equation}\label{eq:certificate}
 A^{\top}\pi\le Qx+c,\qquad
 (A^{\top}\pi)_e=(Qx+c)_e\quad\text{for every }e\text{ with }x_e>0.
\end{equation}
The KKT conditions~\eqref{eq:KKT} at the optimum $x$ guarantee
feasibility. Every solution $\pi$ gives
$\sigma=Qx+c-A^{\top}\pi\ge0$ with $x^{\top}\sigma=0$, and hence forms an
optimal primal--dual pair with $x$.

\subsection{Complexity}
Each LP has polynomial dimensions and an integer constraint matrix
built from $A$, its transpose, signed copies, and unit coefficients.
In particular, the recovery LP uses $A_Z$, and the certificate LP uses
rows of $A^{\top}$; $t_N$ and $Qx+c$ occur only on their right-hand sides.
The trial system
uses polynomially many rational elimination operations. Since
$\log\kappa=O(\log m+r\log(rU))$, the local parameter
$B=4m\kappa^2$ has structurally bounded encoding.
Proposition~\ref{prop:local} therefore uses
$\poly(r,m,\enc{A})$ basic operations per call. There are at most $m$ local
calls, at most $m+1$ iterations of the two LPs and the trial system,
and at most two LPs for recovery and certification.
This proves the operation bound.

Let $\mathcal L$ be the complete input length. For every possible $F$,
the LP~\eqref{eq:smalloutside} is built directly from the original input.
A basic optimal solution after standardization has encoding length
$\poly(\mathcal L)$, uniformly over $F$. Thus $\alpha_F,v$, and the
right-hand side of~\eqref{eq:trial} have polynomial encodings.
Determinant bounds for rational elimination give the same guarantee
for $t$. The descent LP has numerical data $Qt+c$
and $1/q_e$ of polynomial encoding length. Choose basic optimal
solutions after standardization; their encodings, and those of
$V_F,E_F,\varepsilon$, are polynomial in $\mathcal L$.
The LP and elimination algorithms also control intermediate numbers.

Inside the local solver, $\delta=\varepsilon/(2mG)$ and the compression
input have polynomial encodings. Frank--Tardos controls its
intermediate numbers, while the local QP has structurally bounded
encoding. Its rational output and the rescaled point have polynomial
encodings. Only the index set $F$ enters the next iteration; all
numerical data are reconstructed from the original input and $F$.
The recovery LP has polynomially encoded right-hand side
$b-A_Nt_N$ and costs $c_Z$, so a basic optimal solution has polynomial
encoding length. The certificate LP then has right-hand side
$Qx+c=Qt+c$; the support of $x$ only selects its equality rows.
A basic feasible solution after standardization has polynomial
encoding length, and both LP computations obey the same intermediate
encoding bound.
This completes the proof of Theorem~\ref{thm:main}.

\section{Continuous multicommodity congestion}\label{sec:applications}
Let $(V,E)$ be a directed graph with node--arc incidence matrix $D$.
For commodities $k=1,\ldots,p$, let $x^k\ge0$ satisfy $Dx^k=b^k$.
A subset $E_{\rm cap}\subseteq E$ has shared finite capacities
$\sum_{k=1}^p x_e^k\le u_e$.
Consider the objective
\begin{equation}\label{eq:flowcost}
 \sum_{e\in E}\left[
 \tfrac12\alpha_e\left(\sum_{k=1}^p x_e^k\right)^2
 +\beta_e\sum_{k=1}^p x_e^k\right]
 +\sum_{k=1}^p\sum_{e\in E}c_e^k x_e^k,\qquad \alpha_e\ge0,
\end{equation}
with rational data throughout.

\begin{corollary}\label{cor:flow}
Continuous multicommodity flow with shared capacities and
objective~\eqref{eq:flowcost} has a strongly polynomial exact algorithm.
The operation count is polynomial in the size of the explicit
commodity--arc formulation, independently of the demands, capacities,
and all cost coefficients. The algorithm returns an optimum with an
optimality certificate, an infeasibility certificate, or a feasible point
and an unboundedness direction. Intermediate encodings are polynomial
in the complete input. The number of commodities need not be fixed.
\end{corollary}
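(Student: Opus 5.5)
The plan is to reformulate the problem as an instance of \eqref{eq:problem} whose equality matrix has entries in $\{0,\pm1\}$, and then apply Theorem~\ref{thm:main}. For each arc $e$ introduce an aggregate-flow variable $y_e\ge0$ with the equation $y_e-\sum_{k=1}^p x_e^k=0$. For each $e\in E_{\rm cap}$ introduce a slack $s_e\ge0$ with the equation $y_e+s_e=u_e$.

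The variables are then $(x^1,\dots,x^p,y,s)$, and the constraints are as follows:
\begin{itemize}[label=$\cdot$]
\item $Dx^k=b^k$ for each $k$;
\item the aggregation rows $y-\sum_k x^k=0$;
\item the capacity rows $y_e+s_e=u_e$;
\item nonnegativity of all variables.
\end{itemize}
The objective \eqref{eq:flowcost} becomes $\sum_e\bigl[\tfrac12\alpha_e y_e^2+\beta_e y_e\bigr]+\sum_{k,e}c^k_e x^k_e$. This is separable convex quadratic: $q_e=\alpha_e\ge0$ on the $y$-variables and $q=0$ on the $x$- and $s$-variables. The linear costs are $\beta$ and $c^k$, and $0$ on the slacks. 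The constraint matrix has entries in $\{0,\pm1\}$, since $D$ is an incidence matrix and the remaining rows use only unit coefficients.

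Its dimensions are $r=p|V|+|E|+|E_{\rm cap}|$ rows and $m=p|E|+|E|+|E_{\rm cap}|$ columns, so $\enc{A}$ is polynomial in the size of the explicit commodity--arc formulation. Theorem~\ref{thm:main} therefore gives $\poly(r,m,\enc{A})$ basic operations. In particular the count is strongly polynomial and independent of the demands $b^k$, the capacities $u$, and $\alpha,\beta,c$, which enter only as $b,q,c$ of the reformulated problem. The theorem also guarantees polynomially bounded intermediate encodings. No assumption on $p$ is needed, since $p$ enters only through the dimensions.

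What remains is to translate the outputs back to the original problem.
\begin{itemize}[label=$\cdot$]
\item \emph{Optimum.} The feasible sets correspond bijectively, with $y$ and $s$ determined by $x$, and the objectives agree. An optimal $(x,y,s)$ therefore projects to an optimal flow. The dual multipliers (node potentials per commodity, arc-aggregation potentials, and capacity duals) certify optimality through the KKT conditions \eqref{eq:KKT} of the reformulation.
\item \emph{Infeasibility.} A Farkas certificate for the reformulated system is a certificate for the original system written in this equivalent form.
\item \emph{Unboundedness.} A feasible point and a direction $(d_x,d_y,d_s)$ with $Ad=0$, $d\ge0$ and decreasing objective give the flow direction $d_x$ together with the induced changes in the aggregate flows and slacks.
\end{itemize}
I expect no real obstacle here. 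The only point needing care is to state explicitly that the certificates are read in the lifted formulation, and to note that $\enc{A}$ is polynomial in $p|E|+|V|$ even though $p$ is not fixed.
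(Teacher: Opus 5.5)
Your proposal is correct and follows the paper's proof essentially step for step. It uses the same lift with aggregate variables $y$ and capacity slacks, the same $\{0,\pm1\}$ equality system with $p|V|+|E|+|E_{\rm cap}|$ rows and $p|E|+|E|+|E_{\rm cap}|$ columns, an application of Theorem~\ref{thm:main}, and the transfer of certificates through the correspondence between feasible flows and their unique lifts.
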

\begin{proof}
Introduce nonnegative aggregate variables $y$ and capacity slacks
$z\in\R^{E_{\rm cap}}_{\ge0}$, with equalities
\[
 Dx^k=b^k,\qquad
 y_e-\sum_{k=1}^p x_e^k=0\ (e\in E),\qquad
 y_e+z_e=u_e\ (e\in E_{\rm cap}).
\]
The objective becomes
\[
 \sum_{e\in E}\left(\tfrac12\alpha_e y_e^2+\beta_e y_e\right)
 +\sum_{k=1}^p\sum_{e\in E}c_e^k x_e^k.
\]
It is separable in the lifted variables: $y$ has curvature
$\alpha$, while $x^k$ and $z$ have zero curvature.
There are $p|E|+|E|+|E_{\rm cap}|$ variables and
$p|V|+|E|+|E_{\rm cap}|$ equations, all with coefficients in
$\{0,\pm1\}$. Demands and capacities occur only on the right-hand side;
costs occur only in the objective. Every original feasible flow has a
unique lift with the same cost, and every lifted feasible point projects
to such a flow. Apply Theorem~\ref{thm:main}. Its certificates verify
optimality or infeasibility through this explicit equivalence, and a
lifted unboundedness ray projects to one of the same cost.
\end{proof}

\section{Discussion}\label{sec:discussion}

A natural question is which classes of
nonseparable convex quadratic objectives admit the same independence
from objective coefficients. Our proof uses separability in the exchange
inequalities underlying proximity, in preserving convexity under
objective compression, and in the circuit inequalities of
Lemmas~\ref{lem:cover} and~\ref{lem:progress}.

Extending the guarantee to arbitrary positive semidefinite quadratic
matrices would be a more ambitious goal, as it would yield a strongly
polynomial algorithm for general linear programming.
Indeed, feasibility
of $Bx=d$, $x\ge0$ can be tested by solving
\[
  \min_{x\ge0}\tfrac12\norm{Bx-d}_2^2.
\]
The quadratic matrix is $B^{\top}B$, and the minimum is zero exactly when the
linear system is feasible. The constraints contain only nonnegativity,
so an operation bound independent of the objective coefficients would
be polynomial in the dimensions alone.

\paragraph{AI methodology.}
The local solver in Section~\ref{sec:local} was discovered by GPT-6.1 Sol.
Its construction combines Frank--Tardos objective compression with
two-sided continuous--integer proximity to obtain a point near an optimum
of the current relaxed problem from a compressed continuous quadratic
program.
The authors take full responsibility for the mathematical claims,
proofs, and references in this paper.

\end{document}